\documentclass[11pt]{amsart}

\usepackage{amsmath,amssymb,amsthm,mathtools,microtype}
\usepackage[hidelinks]{hyperref}
\hypersetup{
  pdftitle={Bezoutian Decoupling for Conformal Yang--Mills Multiplets in (A)dS},
  pdfauthor={Weiqi Jiang}
}

\newtheorem{theorem}{Theorem}[section]
\newtheorem{proposition}[theorem]{Proposition}
\newtheorem{lemma}[theorem]{Lemma}
\newtheorem{corollary}[theorem]{Corollary}
\theoremstyle{remark}

\newcommand{\AdS}{(A)dS}
\newcommand{\R}{\mathbb R}

\newcommand{\one}{\mathbf 1}

\title[Bezoutian decoupling for conformal Yang--Mills]{Bezoutian Decoupling for Conformal Yang--Mills Multiplets in $(A)dS$}
\author{Weiqi Jiang}
\address{Institute of Theoretical Physics, Chinese Academy of Sciences, Beijing, China}

\begin{document}
\raggedbottom

\begin{abstract}
Metsaev exhibited generic and decoupled formulations of conformal
Yang--Mills theory in $(A)dS_6$, $(A)dS_8$, and $(A)dS_{10}$, and conjectured
the corresponding relations in higher even dimensions.  Motivated by the
low-dimensional matrices in Appendix C, we identify and diagonalize a
natural all-$N$ Bezoutian continuation of the three generic Gram forms.
The monic Hankel cutoff, the evaluation pattern in Appendix C, and the
conjectured mass nodes determine this continuation uniquely.  Its vector and
radical Gram-form diagonalization identities hold for every
$D=d+1=2N+4$, $N\geq1$.  The continuation is the Bezout matrix of
$z\prod_{i=1}^N(z-\rho i(2N+1-i))$ and $1$; evaluation at its simple roots
produces the Vandermonde congruence appearing in the field redefinition.  The
same calculation gives closed formulas for the inverse, determinant, and
inertia, and identifies the normalization weights with Johnson-graph
multiplicities.  Any nonlinear coefficient algebra already specified in the
generic formulation is transported by this change of basis.  The result does
not construct such an algebra in arbitrary dimension.  At $N=4$, we impose
associativity, Frobenius invariance, and a fixed flat specialization; these
algebraic constraints still admit a one-parameter family of pairwise distinct
products in a fixed generic basis.
\end{abstract}

\maketitle
\tableofcontents

\section{Introduction}\label{sec:introduction}

Ordinary-derivative formulations replace the higher derivatives of a
conformal gauge theory by auxiliary vector fields and Stueckelberg scalars.
For conformal Yang--Mills theory in flat even-dimensional space this
formulation is available in arbitrary dimension~\cite{Metsaev2024Flat}.
In $\AdS$ space, Metsaev constructed two presentations in
dimensions six, eight, and ten~\cite{Metsaev2024AdS}.  The generic
presentation is regular in the flat limit but has nondiagonal kinetic terms.
The decoupled presentation separates one massless vector from a tower of
massive Stueckelberg systems.  Appendix~C of~\cite{Metsaev2024AdS} gives the
change-of-basis matrices in the three computed dimensions and states the
corresponding relations as conjectures for higher odd values of the parameter
$d$.

Motivated by those low-dimensional matrices, we identify a natural all-$N$
Bezoutian continuation and prove the associated Gram-form identities.  The
continuation retains the monic Hankel cutoff of the known generic forms, the
evaluation pattern of the Appendix~C field map, and the proposed mass
parameters.  These requirements determine it uniquely; thus, under the stated
continuation assumptions, the Gram form is forced rather than chosen to fit a
diagonal answer.  The standard simple-root Vandermonde congruence is
implemented by precisely the matrix appearing in the field redefinition.  The
normalization reduces to a factorial identity, and the same calculation yields
the inverse transformation and the inertia of the kinetic form.

Following the notation of~\cite{Metsaev2024AdS}, let $d$ be one less than the
spacetime dimension $D$ and write
\begin{equation}\label{eq:intro-parameters}
 d=2N+3,\qquad D=d+1=2N+4,\qquad N\in\mathbb Z_{>0},\qquad
 \rho\in\mathbb R\setminus\{0\},
\end{equation}
and define
\begin{equation}\label{eq:intro-nodes}
 \lambda_i=i(2N+1-i),\qquad x_i=\rho\lambda_i,\qquad 1\leq i\leq N.
\end{equation}
The sign of $\rho$ distinguishes the de Sitter and anti-de Sitter
conventions.  All square roots below are taken on the positive real branch.

\begin{theorem}[Bezoutian Gram-form diagonalization]\label{thm:main}
Let $H_N$ and $H_N^{\rm sub}$ be the vector and radical Gram matrices defined
in Section~\ref{sec:formulations}.  Put
\begin{equation}\label{eq:intro-weights}
 w_i=\frac{(2N)!(2N+1-2i)}{i!(2N+1-i)!},\qquad
 k_\rho=(-\rho)^N(2N)!,
\end{equation}
and let
\begin{equation}\label{eq:intro-matrices}
 (M_N)_{ij}=x_i^j,\qquad
 K_N=\operatorname{diag}_{1\leq i\leq N}((-1)^i\sqrt{w_i}),
\end{equation}
with $L_{N+1}$ as in~\eqref{eq:L-definition}.  Then $M_N$ and $L_{N+1}$ are
invertible and
\begin{align}
 L_{N+1}H_NL_{N+1}^{\mathsf T}
 &=k_\rho\operatorname{diag}(1,-1,1,\ldots,(-1)^N),
 \label{eq:intro-vector-congruence}\\
 K_NM_NH_N^{\rm sub}M_N^{\mathsf T}K_N
 &=k_\rho\operatorname{diag}_{1\leq i\leq N}
       \bigl((-1)^i x_i\bigr).
 \label{eq:intro-radical-congruence}
\end{align}
The radical diagonal entries reproduce Metsaev's conjectured mass nodes,
\begin{equation}\label{eq:intro-masses}
 m_i^2=x_i=\rho i(d-2-i),\qquad 1\leq i\leq N.
\end{equation}
The determinant, inverse, and inertia are given explicitly in
Section~\ref{sec:inverse}.
\end{theorem}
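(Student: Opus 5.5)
The plan is to reduce both congruences \eqref{eq:intro-vector-congruence} and \eqref{eq:intro-radical-congruence} to the classical evaluation identity for Bezout matrices, and then to a single factorial computation.

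\emph{Step 1: the Bezout evaluation identity.} By the description in Section~\ref{sec:formulations}, $H_N$ is the Bezout matrix of $f(z)=z\,p(z)$ and $1$, where $p(z)=\prod_{i=1}^N(z-x_i)$. I take $H_N^{\rm sub}$ to be the Bezout matrix of $p$ and $1$, the radical block obtained after stripping the root $z=0$; this should be confirmed against its definition there. The coefficients of a Bezout matrix $B(f,1)$ are defined by
\[
\frac{f(z)-f(w)}{z-w}=\sum_{j,k}B_{jk}\,z^jw^k .
\]
Let $V$ be the Vandermonde matrix of a set of distinct roots $r_a$ of $f$, so $V_{aj}=r_a^j$. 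Then $(VBV^{\mathsf T})_{ab}$ is the value of this kernel at $(z,w)=(r_a,r_b)$. For $a\neq b$ this value is $0$, because $f(r_a)=f(r_b)=0$. For $a=b$ it is $f'(r_a)$. Hence
\[
VBV^{\mathsf T}=\operatorname{diag}\bigl(f'(r_a)\bigr).
\]
Simplicity of the roots is immediate: $\rho\neq0$, each $\lambda_i>0$, and $\lambda_i-\lambda_j=(i-j)(2N+1-i-j)\neq0$ for $i\neq j\leq N$. This also proves that $M_N$ and the Vandermonde factor of $L_{N+1}$ are invertible. I expect $L_{N+1}$ from \eqref{eq:L-definition} to be exactly the Vandermonde matrix at the nodes $0,x_1,\dots,x_N$, rescaled row by row by $\sqrt{w}$-type weights with signs.

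\emph{Step 2: the derivatives at the nodes.} At $z=0$,
\[
f'(0)=p(0)=(-\rho)^N\prod_i\lambda_i ,
\]
and
\[
\prod_{i=1}^N i(2N+1-i)=N!\cdot\frac{(2N)!}{N!}=(2N)! ,
\]
so $f'(0)=k_\rho$. At $z=x_i$,
\[
f'(x_i)=x_i\,p'(x_i)=x_i\,\rho^{N-1}\prod_{j\neq i}(i-j)(2N+1-i-j).
\]
The core lemma is the factorial identity
\[
\rho^{N-1}\lambda_i\prod_{j\neq i}(i-j)(2N+1-i-j)=(-1)^{i}\frac{k_\rho}{w_i}.
\]
To prove it I will split the product into $\prod_{j\neq i}(i-j)=(-1)^{N-i}(i-1)!\,(N-i)!$ and $\prod_{j\neq i}(2N+1-i-j)$, which is a ratio of factorials with the single factor $2N+1-2i$ removed. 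After that, one checks the signs and powers of $\rho$. This sign and factorial bookkeeping is the main place I expect errors. I will first test it at $N=1,2$ against Appendix~C.

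\emph{Step 3: assembling the congruences.} For the vector form, apply Step~1 with nodes $0,x_1,\dots,x_N$. The diagonal entries are $k_\rho$ and $(-1)^ix_i\,k_\rho/(x_iw_i)=(-1)^ik_\rho/w_i$. The row rescaling in $L_{N+1}$ by $\sqrt{w_i}$ (and by $1$ at the node $0$) then gives \eqref{eq:intro-vector-congruence}.

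For the radical form, note that $(M_N)_{ij}=x_i^{j}$ with $j=1,\dots,N$, so $M_N=\operatorname{diag}(x_i)\,V_p$, where $V_p$ is the Vandermonde matrix with exponents $0,\dots,N-1$. Step~1 applied to $p$ gives
\[
M_NH_N^{\rm sub}M_N^{\mathsf T}=\operatorname{diag}\bigl(x_i^2p'(x_i)\bigr)=\operatorname{diag}\bigl(x_if'(x_i)\bigr)=\operatorname{diag}\bigl((-1)^ik_\rho x_i/w_i\bigr).
\]
Conjugating by $K_N$ multiplies the $i$-th entry by $w_i$, since the signs square away. This yields \eqref{eq:intro-radical-congruence}. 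Finally, \eqref{eq:intro-masses} is just $x_i=\rho i(2N+1-i)$ rewritten using $d-2=2N+1$.
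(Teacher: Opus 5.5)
Your proposal is correct and follows essentially the paper's own route. $H_N$ and $H_N^{\rm sub}$ are the Bezout matrices of $zQ_N$ and of $Q_N$ with $1$. Evaluation at the simple roots gives the diagonal entries $k_\rho$ and $x_iQ_N'(x_i)$, resp.\ $x_i^2Q_N'(x_i)$, and the factorial identity $w_ix_iQ_N'(x_i)=(-1)^ik_\rho$ finishes the argument.

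There is one slip to fix: your displayed core lemma should carry $\rho^{N}$, not $\rho^{N-1}$. The correct statement is that $x_iQ_N'(x_i)=\rho^N\lambda_i\prod_{j\neq i}(\lambda_i-\lambda_j)$ equals $(-1)^ik_\rho/w_i$; at $N=1$ your version reads $2=2\rho$. Your Step~3 already uses the correct form, so nothing downstream changes.
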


The mass sequence itself was known from the free conformal-field
factorization~\cite{Metsaev2014}.  Theorem~\ref{thm:main} provides an all-$N$
Vandermonde realization of the map pattern displayed in Appendix~C of
\cite{Metsaev2024AdS} for $N\leq3$.  It proves the corresponding Gram
congruences for the uniquely characterized continuation, including the
normalization, radical map, and inverse, in every dimension covered
by~\eqref{eq:intro-parameters}.

An invertible change of basis also transports any multiplication, ideal, gauge
bracket, and invariant form already present in the generic formulation; this
is proved in Section~\ref{sec:transport}.  It does not construct the
conjectural source dynamics for arbitrary $N$ or prove the separate
generic-basis equations that were not derived in~\cite{Metsaev2024AdS}.

The diagonal free action and the invertible field map also fix the two
generic quadratic forms uniquely.  Conditional on the proposed interacting
action and pairing, equation (2.43) then follows by a finite regrouping for
every $N$.  At $N=4$, the first case not treated in
\cite{Metsaev2024AdS}, the natural associative Frobenius conditions with
fixed flat specialization leave a one-parameter family of pairwise distinct
products in the fixed generic basis.  Thus the linear data do not select a
unique nonlinear continuation.  These statements are proved in
Section~\ref{sec:quadratic-determination}.

A separately archived reproducibility supplement contains exact symbolic
checks~\cite{Jiang2026Zenodo}.  They cover the Gram congruences, the published
low-dimensional matrices, and the $N=4$ family.

The proof is organized as follows.  Section~\ref{sec:formulations} defines the
two source forms.  Sections~\ref{sec:spectral-polynomial} and
\ref{sec:decoupling} prove the Bezoutian congruences.  We derive the inverse
and low-dimensional specializations in Section~\ref{sec:inverse}, explain
nonlinear transport in Section~\ref{sec:transport}, and separate the
determined quadratic data from the nonlinear ambiguity in
Section~\ref{sec:quadratic-determination}.  Johnson-graph, $SU(2)$, and
Legendre--Stirling models appear in Section~\ref{sec:spectral-models}; the
remaining questions are discussed in Section~\ref{sec:discussion}.

\section{The generic forms}\label{sec:formulations}

Define the spectral polynomial
\begin{equation}\label{eq:Q-definition}
 Q_N(z)=\prod_{i=1}^N(z-x_i)=\sum_{r=0}^N c_r z^r.
\end{equation}
The vector Gram form in the generic basis $E_0,\ldots,E_N$ is the Hankel
matrix
\begin{equation}\label{eq:H-definition}
 (H_N)_{mn}=
 \begin{cases}
 c_{m+n},&m+n\leq N,\\
 0,&m+n>N,
 \end{cases}
 \qquad 0\leq m,n\leq N.
\end{equation}
The coefficients are
\begin{equation}\label{eq:source-G}
 c_s=(-\rho)^{N-s}G_s,\qquad
 G_s=e_{N-s}(\lambda_1,\ldots,\lambda_N),
\end{equation}
where $e_r$ is the elementary symmetric polynomial.  Thus
\eqref{eq:H-definition} is a natural monic Hankel continuation to arbitrary
$N$ of the generic bilinear forms displayed in~\cite{Metsaev2024AdS}.  The
following characterization explains in what sense the continuation is
forced.

\begin{lemma}[Uniqueness of the Hankel continuation]
\label{lem:unique-hankel-continuation}
Fix the proposed nodes $x_1,\ldots,x_N$ and let
\[
 A(z)=\sum_{r=0}^N a_rz^r,\qquad a_N=1.
\]
Suppose that an all-$N$ continuation retains the Hankel cutoff
\[
 (\widetilde H_N)_{mn}=
 \begin{cases}
  a_{m+n},&m+n\leq N,\\
  0,&m+n>N,
 \end{cases}
\]
and the evaluation pattern of Appendix~C.  More precisely, with
$v(z)=(1,z,\ldots,z^N)$, assume that the massless evaluation row is
orthogonal to the proposed massive rows:
\[
 v(0)\widetilde H_Nv(x_i)^{\mathsf T}=0,
 \qquad 1\leq i\leq N.
\]
Then $A=Q_N$ and $\widetilde H_N=H_N$.
\end{lemma}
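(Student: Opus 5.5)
Compute the orthogonality condition directly. Since $v(0)=(1,0,\ldots,0)$, the bilinear form $v(0)\widetilde H_Nv(z)^{\mathsf T}$ only sees the row $m=0$ of $\widetilde H_N$. That row is $(a_0,a_1,\ldots,a_N)$, because $0+n\leq N$ for every $n$, so no entry of it is cut off. Hence
\[
 v(0)\widetilde H_Nv(z)^{\mathsf T}=\sum_{n=0}^N a_nz^n=A(z).
\]
The hypothesis is therefore exactly $A(x_i)=0$ for $1\leq i\leq N$.

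Next, check that the nodes are pairwise distinct. For $1\leq i<j\leq N$,
\[
 \lambda_j-\lambda_i=(j-i)(2N+1-i-j).
\]
Here $j-i>0$, and $i+j\leq 2N-1$ gives $2N+1-i-j\geq 2>0$. Since $\rho\neq0$, the $x_i=\rho\lambda_i$ are $N$ distinct numbers.

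The polynomial $A$ is monic of degree $N$ and vanishes at these $N$ distinct points. So $A-Q_N$ has degree at most $N-1$ and has $N$ distinct roots, which forces $A-Q_N=0$ and thus $A=Q_N$.

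The coefficients then agree, $a_r=c_r$ for all $r$. Because $\widetilde H_N$ and $H_N$ are built from these coefficients by the same cutoff rule, $\widetilde H_N=H_N$.

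The only step that needs care is the distinctness of the nodes, and the factorization of $\lambda_j-\lambda_i$ above settles it. It is worth remarking that the nonzero entries of the Hankel structure beyond row $0$ play no role: all of the content of the lemma comes from the first row together with monicity.
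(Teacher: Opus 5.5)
Your proof is correct and follows the paper's argument: both read off $A(x_i)=0$ from the uncut zeroth row of $\widetilde H_N$, then conclude $A=Q_N$ from monicity and the distinctness of the nodes. The only difference is that you prove distinctness inline via $\lambda_j-\lambda_i=(j-i)(2N+1-i-j)$. The paper only asserts it at this point and justifies it later with the node-gap identity in Section~\ref{sec:decoupling}.
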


\begin{proof}
The zeroth row of $\widetilde H_N$ is $(a_0,\ldots,a_N)$, and hence
\[
 v(0)\widetilde H_Nv(x_i)^{\mathsf T}=A(x_i)=0.
\]
The nodes are pairwise distinct.  Since $A$ is monic of degree $N$ and
vanishes at all of them,
\[
 A(z)=\prod_{i=1}^N(z-x_i)=Q_N(z).
\]
The equality of the two Hankel matrices follows coefficient by coefficient.
\end{proof}

The lemma does not derive its continuation assumptions from an interacting
field theory.  It shows that the conjectured mass nodes and the
Appendix~C evaluation map select a unique Gram form within the Hankel class
suggested by the three known dimensions.

The Stueckelberg and auxiliary vectors belonging to the radical use the
shifted form
\begin{equation}\label{eq:Hsub-definition}
 (H_N^{\rm sub})_{mn}=c_{m+n+1},\qquad 0\leq m,n\leq N-1,
\end{equation}
with $c_s=0$ for $s>N$.

It is useful to include the massless node $x_0=0$.  Let
\begin{equation}\label{eq:V-definition}
 (V_N)_{ij}=x_i^j,\qquad 0\leq i,j\leq N,
\end{equation}
and put
\begin{equation}\label{eq:D-definition}
 D_N=\operatorname{diag}(1,-\sqrt{w_1},\sqrt{w_2},\ldots,
 (-1)^N\sqrt{w_N}).
\end{equation}
The field-redefinition matrix is
\begin{equation}\label{eq:L-definition}
 L_{N+1}=D_NV_N=
 \begin{pmatrix}1&0\\0&K_N\end{pmatrix}
 \begin{pmatrix}1&0\\\one&M_N\end{pmatrix}.
\end{equation}
The radical transformation is the lower-right evaluation block $K_NM_N$.

The three coefficient rows in the source paper are recovered immediately:
for $N=1,2,3$, the nontrivial $G_s$ are respectively
\begin{equation}\label{eq:small-G}
 (2,1),\qquad (24,10,1),\qquad (720,252,28,1).
\end{equation}
The proof below treats $N$ symbolically and uses these rows only as a check of
conventions.

\section{The spectral polynomial and its Bezoutian}\label{sec:spectral-polynomial}

Let $v(z)=(1,z,\ldots,z^N)$.  Grouping the entries of the Hankel form by
their total degree gives
\begin{align}
 v(x)H_Nv(y)^{\mathsf T}
 &=\sum_{r=0}^Nc_r\sum_{m=0}^r x^m y^{r-m}\notag\\
 &=\frac{xQ_N(x)-yQ_N(y)}{x-y}.
 \label{eq:bezout-kernel}
\end{align}
The expression on the right is understood by continuity when $x=y$.  Thus
$H_N$ is the Bezout matrix of $P_N(z)=zQ_N(z)$ and $1$.  The usual
simple-root Vandermonde reduction of a Bezoutian
\cite{HelmkeFuhrmann1989} will now be evaluated with all normalizations kept
explicit.

\begin{lemma}\label{lem:root-orthogonality}
The $N+1$ row vectors $v(x_i)$, $0\leq i\leq N$, are mutually orthogonal for
$H_N$.  Their norms are
\begin{equation}\label{eq:root-norms}
 v(0)H_Nv(0)^{\mathsf T}=k_\rho,
 \qquad
 v(x_i)H_Nv(x_i)^{\mathsf T}=x_iQ_N'(x_i),\quad i\geq1.
\end{equation}
\end{lemma}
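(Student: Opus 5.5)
The plan is to read every claim directly off the Bezout kernel \eqref{eq:bezout-kernel},
\[
 B(x,y)=v(x)H_Nv(y)^{\mathsf T}=\frac{P_N(x)-P_N(y)}{x-y},\qquad P_N(z)=zQ_N(z),
\]
evaluated at the $N+1$ nodes $x_0=0,x_1,\ldots,x_N$. These nodes are exactly the roots of $P_N$. They are pairwise distinct because $\lambda_i=i(2N+1-i)$ is strictly increasing for $1\le i\le N$, each $\lambda_i>0$, and $\rho\neq0$. The same distinctness was already used in Lemma~\ref{lem:unique-hankel-continuation}.

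\emph{Orthogonality.} Take $i\neq j$. Both $P_N(x_i)$ and $P_N(x_j)$ vanish, and $x_i-x_j\neq0$. The quotient $B(x_i,x_j)$ is therefore a genuine difference quotient with zero numerator, so $v(x_i)H_Nv(x_j)^{\mathsf T}=0$. No appeal to continuity is needed here, since the denominator is nonzero.

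\emph{Norms.} On the diagonal the kernel is understood by continuity, so $B(x,x)=P_N'(x)$. One can see this either as the limit of the difference quotient, or directly from the first line of \eqref{eq:bezout-kernel}, which gives $\sum_r c_r(r+1)x^r$, the derivative of $\sum_r c_r x^{r+1}$. Differentiating gives $P_N'(z)=Q_N(z)+zQ_N'(z)$.
\begin{itemize}
\item For $i\ge1$ we have $Q_N(x_i)=0$, so the norm is $x_iQ_N'(x_i)$, which is the second formula in \eqref{eq:root-norms}.
\item For $i=0$ the norm is $P_N'(0)=Q_N(0)=c_0$. This also matches $(H_N)_{00}=c_0$ directly.
\end{itemize}

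\emph{Identifying $c_0$ with $k_\rho$.} This is the only step needing any computation. By \eqref{eq:source-G}, $c_0=(-\rho)^N G_0=(-\rho)^N\prod_{i=1}^N\lambda_i$. Split the product as
\[
 \prod_{i=1}^N i(2N+1-i)=\Bigl(\prod_{i=1}^N i\Bigr)\Bigl(\prod_{i=1}^N(2N+1-i)\Bigr)=N!\cdot\frac{(2N)!}{N!}=(2N)!.
\]
Hence $c_0=(-\rho)^N(2N)!=k_\rho$. The small rows in \eqref{eq:small-G} serve as a check: $2=2!$, $24=4!$, and $720=6!$.

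I do not expect a real obstacle. The one point needing care is keeping the sign convention $(-\rho)^N$ consistent between $c_0$ and $k_\rho$.
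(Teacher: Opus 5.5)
Your proof is correct and follows the same route as the paper. The kernel vanishes at distinct roots of $P_N$, the diagonal value is $P_N'(x_i)=Q_N(x_i)+x_iQ_N'(x_i)$, and $\prod_{i=1}^N i(2N+1-i)=(2N)!$ gives $c_0=k_\rho$; your explicit arguments for node distinctness and for the diagonal via $\sum_r (r+1)c_r x^r$ fill in details the paper leaves implicit.
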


\begin{proof}
For two distinct roots of $P_N$, equation~\eqref{eq:bezout-kernel} vanishes.
On the diagonal, differentiating the divided difference gives
$x_iQ_N'(x_i)$ for $i>0$.  At zero the norm is
\[
 c_0=(-\rho)^N\prod_{i=1}^N\lambda_i.
\]
The elementary product
\[
 \prod_{i=1}^N i(2N+1-i)=N!\frac{(2N)!}{N!}=(2N)!
\]
therefore gives $c_0=k_\rho$.
\end{proof}

The same polynomial already displays two useful structures.  On setting
$X=N(N+1)-z$ and reindexing $r=N-i$, one finds
\begin{equation}\label{eq:legendre-product}
 \prod_{i=1}^N(z-\lambda_i)
 =(-1)^N\prod_{r=0}^{N-1}\bigl(X-r(r+1)\bigr).
\end{equation}
The product on the right is the generalized falling factorial associated
with the Legendre--Stirling numbers of the first kind
\cite{EverittLittlejohnWellman2002}.  In particular, if
\[
 \prod_{r=0}^{N-1}(X-r(r+1))=\sum_{j=0}^N\ell_{N,j}X^j,
\]
then
\begin{equation}\label{eq:legendre-stirling-recurrence}
 \ell_{N,j}=\ell_{N-1,j-1}-N(N-1)\ell_{N-1,j},\qquad \ell_{0,0}=1,
\end{equation}
and the generic coefficients are the binomial translates
\begin{equation}\label{eq:G-legendre-stirling}
 G_s=\sum_{j=s}^N\binom{j}{s}\ell_{N,j}
      \bigl(N(N+1)\bigr)^{j-s}.
\end{equation}
These identities are not needed for the congruence proof, but they replace
the dimension-by-dimension $G_s$ by a classical triangular array.

There is also an operator form.  The Legendre differential operator
\[
 \mathcal L=-\frac{d}{dt}\left((1-t^2)\frac{d}{dt}\right)
\]
has eigenvalues $r(r+1)$ on the Legendre polynomials.  Hence
\eqref{eq:legendre-product} is the characteristic polynomial of
$N(N+1)I-\mathcal L$ on $\R[t]_{\leq N-1}$.  The Gram coefficients are
therefore translated characteristic coefficients of a finite Legendre
spectrum.

\section{Proof of the decoupling theorem}\label{sec:decoupling}

It remains to evaluate the diagonal norms in
Lemma~\ref{lem:root-orthogonality}.  The numbers $\lambda_i$ are strictly
increasing; hence the nodes $x_i=\rho\lambda_i$ are pairwise distinct whenever
$\rho\ne0$.  Indeed,
\begin{equation}\label{eq:node-gap}
 \lambda_{i+1}-\lambda_i=2(N-i)>0.
\end{equation}
For $i\neq j$,
\[
 \lambda_j-\lambda_i=(j-i)(2N+1-i-j).
\]
Splitting the product at $j=i$ yields
\begin{equation}\label{eq:node-derivative-absolute}
 \lambda_i\left|\prod_{j\ne i}(\lambda_i-\lambda_j)\right|
 =\frac{i!(2N+1-i)!}{2N+1-2i}.
\end{equation}
The derivative $Q_N'(x_i)$ has the sign of
$(-1)^{N-i}\rho^{N-1}$.  Combining this sign,
\eqref{eq:node-derivative-absolute}, and~\eqref{eq:intro-weights} gives the
normalization identity
\begin{equation}\label{eq:weight-derivative}
 w_i x_iQ_N'(x_i)=(-1)^i k_\rho.
\end{equation}
Notice that $2N+1-2i>0$, so every $w_i$ is positive.

\begin{proof}[Proof of the vector identity in Theorem~\ref{thm:main}]
The rows of $V_N$ are the root-evaluation vectors in
Lemma~\ref{lem:root-orthogonality}.  Consequently $V_NH_NV_N^{\mathsf T}$
is diagonal.  Its zeroth entry is $k_\rho$, and multiplication of its $i$th
entry by the two diagonal factors $(-1)^i\sqrt{w_i}$ gives
$w_i x_iQ_N'(x_i)=(-1)^ik_\rho$.  Since $L_{N+1}=D_NV_N$, this proves
\eqref{eq:intro-vector-congruence}.
\end{proof}

For the radical form, put $u(z)=(z,z^2,\ldots,z^N)$.  The definitions of
$H_N$ and $H_N^{\rm sub}$ give
\begin{equation}\label{eq:radical-kernel}
 u(x)H_N^{\rm sub}u(y)^{\mathsf T}
 =xy\,\frac{Q_N(x)-Q_N(y)}{x-y}.
\end{equation}
Indeed, both sides equal
$xy\sum_{r=1}^Nc_r\sum_{m=0}^{r-1}x^my^{r-1-m}$.
At two distinct roots of $Q_N$ the expression vanishes, while its diagonal
value is
\begin{equation}\label{eq:radical-root-norm}
 u(x_i)H_N^{\rm sub}u(x_i)^{\mathsf T}=x_i^2Q_N'(x_i).
\end{equation}
The rows of $M_N$ are $u(x_i)$.  Multiplying
\eqref{eq:radical-root-norm} by the two factors from $K_N$ and using
\eqref{eq:weight-derivative}, we obtain
\begin{equation}\label{eq:radical-congruence-proof}
 K_NM_NH_N^{\rm sub}M_N^{\mathsf T}K_N
 =k_\rho\operatorname{diag}_{1\leq i\leq N}\bigl((-1)^ix_i\bigr),
\end{equation}
which proves the radical identity and reproduces the conjectured nodes
in~\eqref{eq:intro-masses}.

It is useful to combine the massless and massive normalizations.  With
$P_N(z)=zQ_N(z)$, $x_0=0$, and $w_0=1$, equations
\eqref{eq:root-norms} and~\eqref{eq:weight-derivative} say
\begin{equation}\label{eq:barycentric-master}
 w_iP_N'(x_i)=(-1)^ik_\rho,\qquad 0\leq i\leq N.
\end{equation}
Thus the positive $w_i$ are, up to the common scale and the kinetic signs,
the reciprocal derivative weights familiar from barycentric interpolation
\cite{BerrutTrefethen2004}.

\section{Inverse, determinant, and inertia}\label{sec:inverse}

Factoring $x_i$ from the $i$th row of $M_N$ leaves an ordinary Vandermonde
matrix.  Therefore
\begin{equation}\label{eq:M-determinant}
 \det M_N=\left(\prod_{i=1}^Nx_i\right)
 \left(\prod_{1\leq i<j\leq N}(x_j-x_i)\right)\ne0.
\end{equation}
Since the squared diagonal entries of $K_N$ are the positive numbers $w_i$,
\begin{equation}\label{eq:L-determinant-square}
 (\det L_{N+1})^2=\left(\prod_{i=1}^Nw_i\right)(\det M_N)^2.
\end{equation}

The inverse is obtained by Lagrange interpolation.  Let
\[
 \Lambda_i(z)=\frac{Q_N(z)}{(z-x_i)Q_N'(x_i)}.
\]
The coefficient of $z^{j-1}$ in $\Lambda_i$ is
\[
 \frac{(-1)^{N-j}e_{N-j}(x_1,\ldots,\widehat{x_i},\ldots,x_N)}{Q_N'(x_i)}.
\]
Because $M_N$ carries one additional $x_i$ in its $i$th row,
\begin{equation}\label{eq:M-inverse}
 (M_N^{-1})_{ji}
 =\frac{(-1)^{N-j+i}w_i}{k_\rho}
 e_{N-j}(x_1,\ldots,\widehat{x_i},\ldots,x_N),
 \qquad 1\leq i,j\leq N.
\end{equation}
The block factorization~\eqref{eq:L-definition} then gives
\begin{equation}\label{eq:L-inverse}
 L_{N+1}^{-1}=
 \begin{pmatrix}
 1&0\\
 -M_N^{-1}\one&M_N^{-1}K_N^{-1}
 \end{pmatrix}.
\end{equation}

\begin{corollary}[Inertia]\label{cor:inertia}
For real nonzero $\rho$, the vector Gram form has the same inertia as
$k_\rho\operatorname{diag}_{0\leq i\leq N}((-1)^i)$.  The radical form has
the same inertia as
$\rho k_\rho\operatorname{diag}_{1\leq i\leq N}((-1)^i)$.
In particular, their positive and negative indices are obtained by counting
the two alternating sign strings; no zero eigenvalue occurs.
\end{corollary}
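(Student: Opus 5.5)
The plan is to apply Sylvester's law of inertia to the two congruences of Theorem~\ref{thm:main}. Everything needed is already in hand, so the work is mainly checking that the congruence matrices are real and invertible and then reading off signs.

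First I would confirm that $L_{N+1}$ and $K_NM_N$ are real invertible matrices. They are real because $\rho$ is real and each $w_i$ is positive, so $\sqrt{w_i}$ is real. Invertibility comes from \eqref{eq:M-determinant} and \eqref{eq:L-determinant-square}: the nodes $x_i=\rho\lambda_i$ are nonzero and pairwise distinct by \eqref{eq:node-gap}, and each $w_i>0$.

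Sylvester's law then says that $H_N$ has the same inertia as $L_{N+1}H_NL_{N+1}^{\mathsf T}$. By \eqref{eq:intro-vector-congruence}, this is $k_\rho\operatorname{diag}_{0\le i\le N}((-1)^i)$, which settles the vector statement.

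For the radical form, \eqref{eq:intro-radical-congruence} gives $k_\rho\operatorname{diag}((-1)^ix_i)=k_\rho\operatorname{diag}((-1)^i\rho\lambda_i)$. The $\lambda_i$ are positive, since $i\ge1$ and $2N+1-i\ge N+1>0$. Conjugating by the positive diagonal matrix $\operatorname{diag}(\lambda_i^{-1/2})$ is again a congruence, so the radical form has the same inertia as $\rho k_\rho\operatorname{diag}((-1)^i)$.

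Finally, both diagonal matrices have nonzero entries, because $k_\rho=(-\rho)^N(2N)!\neq0$ and $\rho\neq0$. So no zero eigenvalue occurs, and the positive and negative indices are obtained by counting the signs in the alternating strings.

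The main point that needs care is the real-congruence hypothesis of Sylvester's law, which requires real square roots and nonvanishing determinants. Both are already covered by the positivity of $w_i$ and the distinctness of the nodes, so I do not expect a genuine obstacle.
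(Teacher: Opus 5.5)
Your proposal is correct and follows the paper's own argument. You apply Sylvester's law of inertia to the real invertible congruences of Theorem~\ref{thm:main}, then use a positive diagonal congruence to absorb the positive factors $\lambda_i$, so that only the sign of $\rho k_\rho$ and the alternating signs remain; your checks of reality and invertibility simply spell out what the paper takes from the theorem.
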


\begin{proof}
Congruence by an invertible real matrix preserves inertia.  The factors
$\lambda_i$ in the radical diagonal are all positive, so only the common
sign of $\rho k_\rho$ and the alternating factor remain.
\end{proof}

For comparison with Appendix~C of~\cite{Metsaev2024AdS}, the nodes at
$N=1,2,3$ are
\begin{equation}\label{eq:small-nodes}
 (2),\qquad (4,6),\qquad (6,10,12),
\end{equation}
and the weights are
\begin{equation}\label{eq:small-weights}
 (1),\qquad (3,2),\qquad (5,9,5).
\end{equation}
Together with the signs $(-1)^i$, these give exactly the displayed matrices
in dimensions six, eight, and ten.  Unlike an extrapolation from those three
cases, the proof of Sections~\ref{sec:spectral-polynomial}--\ref{sec:decoupling}
uses only the factorization of $Q_N$ and holds for every $N$.

\section{Functorial transport of the nonlinear structure}\label{sec:transport}

The preceding sections concern two bilinear forms.  The same matrices also
transport the nonlinear coefficient tensors, provided those tensors have
first been specified in the generic formulation.  We record the elementary
statement because it fixes the direction of every field and generator map.

Let $A$ be a finite-dimensional commutative algebra with basis $E_i$,
multiplication $E_iE_j=C_{ij}{}^kE_k$, and invariant form $H$.  For an
invertible matrix $L$, set $E'_a=L_{ai}E_i$.  Then
\begin{equation}\label{eq:transport-tensors}
 C'_{ab}{}^c=L_{ai}L_{bj}C_{ij}{}^k(L^{-1})_{kc},
 \qquad H'=LHL^{\mathsf T}.
\end{equation}

\begin{proposition}\label{prop:tensor-transport}
The transformation~\eqref{eq:transport-tensors} preserves commutativity,
associativity, and invariance of the bilinear form.  If $I\subset A$ is an
ideal with its own basis change $R$, it also preserves all structure maps
$A\otimes I\to I$ and $I\otimes I\to I$ and hence the corresponding extended
Yang--Mills gauge algebra.
\end{proposition}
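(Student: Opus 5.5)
The plan is to treat \eqref{eq:transport-tensors} as nothing more than the change-of-basis formula for a fixed algebra. The first step is to show that $C'$ is exactly the structure tensor of the same multiplication written in the new basis $E'_a=L_{ai}E_i$. Expanding by bilinearity gives
\[
 E'_aE'_b=L_{ai}L_{bj}C_{ij}{}^kE_k .
\]
Inverting the basis relation gives $E_k=(L^{-1})_{kc}E'_c$, and substituting this in yields $E'_aE'_b=C'_{ab}{}^cE'_c$. In the same way, $H(E'_a,E'_b)=L_{ai}H_{ij}L_{bj}=(LHL^{\mathsf T})_{ab}$. Every axiom in the statement is a basis-free property of the pair (multiplication, bilinear form) on the vector space $A$: commutativity, associativity $(xy)z=x(yz)$, and invariance $H(xy,z)=H(x,yz)$. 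So each axiom holds for $(C',H')$ exactly when it holds for $(C,H)$. If an index-level check is wanted, I will verify one case explicitly. For associativity, contracting $C'C'$ makes the inner $L^{-1}L$ pair cancel, which gives $L L L\,(CC)\,L^{-1}$, and both sides of the associativity identity transform identically. Commutativity follows from the symmetry of $C_{ij}{}^k$ in $i,j$. Invariance follows because $H'_{cd}$ with a lowered index $C'_{abd}=C'_{ab}{}^cH'_{cd}$ equals $L_{ai}L_{bj}L_{dl}C_{ijl}$, and this stays totally symmetric.

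For the ideal, I will let $I$ have basis $F_\alpha$ and new basis $F'_\beta=R_{\beta\alpha}F_\alpha$. The structure maps are $E_iF_\alpha=B_{i\alpha}{}^\gamma F_\gamma$ and $F_\alpha F_\beta=D_{\alpha\beta}{}^\gamma F_\gamma$. The same substitution gives the transformed tensors
\[
 B'=L\,R\,B\,R^{-1},\qquad D'=R\,R\,D\,R^{-1}.
\]
These again describe the same bilinear maps $A\otimes I\to I$ and $I\otimes I\to I$, so the ideal property and all compatibility identities among $C$, $B$, $D$ and $H$ are preserved.

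The extended Yang--Mills gauge algebra is built functorially from these data: the bracket on $A\otimes\mathfrak g$, or $I\otimes\mathfrak g$, is $[a\otimes X,b\otimes Y]=ab\otimes[X,Y]$. Its Jacobi identity and invariance depend only on the abstract structure, so they transfer as well.

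The only real obstacle is bookkeeping. Index placement must be consistent: generators transform with $L$, field components with $L^{-\mathsf T}$, and the upper index of $C$ with $L^{-1}$. I will fix this by proving the basis-free statement first and deriving \eqref{eq:transport-tensors} from it, rather than verifying the axioms tensorially.
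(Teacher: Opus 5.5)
Your proof is correct and is essentially the paper's argument. The paper substitutes \eqref{eq:transport-tensors} into the associativity, commutativity, invariance, and ideal identities and lets the adjacent $L^{-1}L$ and $R^{-1}R$ factors cancel, which is your index-level check; your basis-free step, identifying $C'$ and $H'$ as the coordinates of the same multiplication and form in the basis $E'=LE$, simply explains why that cancellation occurs in every such identity at once.
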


\begin{proof}
For example, substitution in associativity gives
\[
 C'_{ab}{}^sC'_{sc}{}^d=C'_{bc}{}^sC'_{as}{}^d,
\]
because the adjacent $L^{-1}L$ factors contract to the identity.  The same
cancellation proves commutativity and
$H'(E'_aE'_b,E'_c)=H'(E'_a,E'_bE'_c)$.  For an ideal, insert $L$ on every
$A$ input, $R$ on every $I$ input, and the appropriate inverse on the output.
Every ideal, Jacobi, and invariance identity is again a contraction in which
the adjacent matrices cancel.
\end{proof}

In the present application $L=L_{N+1}$ and $R=K_NM_N$.  If the generic and
decoupled generators are collected in columns $\mathbf T,\widetilde{\mathbf
T}$ and $\mathbf S,\widetilde{\mathbf S}$, respectively, the Appendix~C
relations are
\begin{equation}\label{eq:generator-maps}
 \widetilde{\mathbf T}=L_{N+1}\mathbf T,
 \qquad
 \widetilde{\mathbf S}=K_NM_N\mathbf S.
\end{equation}
This is exactly the convention $E'=LE$ used in
Proposition~\ref{prop:tensor-transport}.

The connection, gauge parameter, and curvature are basis-independent elements
of the tensor-product gauge algebra.  If their coordinate columns in the
generic and decoupled bases are denoted by $(\phi,\xi,F)$ and
$(\varphi,\eta,\mathcal F)$, then
\begin{equation}\label{eq:vector-field-maps}
 \phi=L_{N+1}^{\mathsf T}\varphi,
 \qquad \xi=L_{N+1}^{\mathsf T}\eta,
 \qquad F=L_{N+1}^{\mathsf T}\mathcal F.
\end{equation}
The radical fields satisfy the corresponding four maps
\begin{equation}\label{eq:radical-field-maps}
 \begin{aligned}
 \phi_{\mathrm{sub}}&=M_N^{\mathsf T}K_N\varphi_{\mathrm{sub}},&
 F_{\mathrm{sub}}&=M_N^{\mathsf T}K_N\mathcal F_{\mathrm{sub}},\\
 \xi_{\mathrm{sub}}&=M_N^{\mathsf T}K_N\eta_{\mathrm{sub}},&
 \phi_{\mathrm{sub}}^{A}&=M_N^{\mathsf T}K_N\varphi_{\mathrm{sub}}^{A}.
 \end{aligned}
\end{equation}
Applying~\eqref{eq:transport-tensors} to
$F=d\phi+\frac12[\phi,\phi]$ and
$\delta_\xi\phi=d\xi+[\phi,\xi]$ proves that these maps intertwine the gauge
transformations.

For $N=1,2,3$, direct transport of the curved products displayed in
\cite{Metsaev2024AdS} reproduces the published decoupled product tables.  At
$N=3$, in particular,
\[
 E_1^2=\frac43E_2+\frac{28}{3}\rho E_3,
 \qquad E_1E_2=E_3.
\]
The curvature term is essential.  Proposition~\ref{prop:tensor-transport}
does not replace it by a flat monomial product; it says that whichever source
tensor is present is carried through the already proved invertible map.

\section{What the decoupling determines}
\label{sec:quadratic-determination}

The congruences of Theorem~\ref{thm:main} also have a converse at the
level of quadratic forms.  This observation is elementary, but it separates
two questions that are otherwise easy to conflate: the quadratic action is
fixed by its diagonalization, whereas the nonlinear coefficient algebra is
not.

Write
\begin{align*}
 \Delta_N&=k_\rho\operatorname{diag}(1,-1,\ldots,(-1)^N),\\
 \Delta_N^{\rm sub}
 &=k_\rho\operatorname{diag}_{1\leq i\leq N}((-1)^i x_i),
 \qquad B_N=K_NM_N.
\end{align*}

\begin{proposition}[Uniqueness of the quadratic pullback]
\label{prop:quadratic-pullback}
For every $N\geq1$ and $\rho\neq0$,
\begin{equation}\label{eq:unique-pullbacks}
 H_N=L_{N+1}^{-1}\Delta_NL_{N+1}^{-\mathsf T},\qquad
 H_N^{\rm sub}=B_N^{-1}\Delta_N^{\rm sub}B_N^{-\mathsf T}.
\end{equation}
Consequently, once the changes of basis and the diagonal normalizations are
fixed, the two generic quadratic forms are uniquely determined.
\end{proposition}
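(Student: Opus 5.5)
The plan is to invert the two congruences of Theorem~\ref{thm:main}. That theorem gives $L_{N+1}H_NL_{N+1}^{\mathsf T}=\Delta_N$ and $B_NH_N^{\rm sub}B_N^{\mathsf T}=\Delta_N^{\rm sub}$. The only additional input needed is invertibility of $L_{N+1}$ and of $B_N=K_NM_N$. Once that is established, both sides are multiplied on the left by $L_{N+1}^{-1}$ (respectively $B_N^{-1}$) and on the right by $L_{N+1}^{-\mathsf T}$ (respectively $B_N^{-\mathsf T}$). The identities \eqref{eq:unique-pullbacks} then follow, using $(A^{\mathsf T})^{-1}=(A^{-1})^{\mathsf T}$.

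\textbf{Step 1: invertibility.} I would prove this first.
\begin{itemize}
\item $\det M_N\neq0$ by \eqref{eq:M-determinant}. This uses that each $x_i=\rho\lambda_i\neq0$, because $\rho\neq0$ and $\lambda_i>0$. It also uses that the nodes are pairwise distinct, by \eqref{eq:node-gap}.
\item $K_N$ is diagonal with entries $\pm\sqrt{w_i}$, and $w_i>0$ since $2N+1-2i>0$. Hence $K_N$ is invertible, and so is $B_N$.
\item The block factorization \eqref{eq:L-definition} gives $\det L_{N+1}=\det K_N\det M_N\neq0$, consistent with \eqref{eq:L-determinant-square}. The explicit inverse \eqref{eq:L-inverse} may also be cited.
\end{itemize}

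\textbf{Step 2: the uniqueness clause.} Suppose $H$ is any symmetric form with $L_{N+1}HL_{N+1}^{\mathsf T}=\Delta_N$. The same multiplication by inverses gives $H=L_{N+1}^{-1}\Delta_NL_{N+1}^{-\mathsf T}$, and this equals $H_N$ by the first step. The radical case is identical with $B_N$ in place of $L_{N+1}$. So the generic quadratic forms are the unique pullbacks of the diagonal normalizations along the fixed changes of basis.

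There is no real obstacle. The only point needing care is recording the nondegeneracy conditions $\rho\neq0$ and distinct nonzero nodes, which make $M_N$ invertible. I would also mention that \eqref{eq:M-inverse} allows the pullbacks to be written entrywise if desired, though the proposition does not require this.
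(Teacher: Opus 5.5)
Your proposal is correct and matches the paper's proof, which inverts the two congruences of Theorem~\ref{thm:main} using the invertibility of $L_{N+1}$ and $B_N=K_NM_N$. The only difference is that you re-derive invertibility from \eqref{eq:M-determinant}, the positivity of the $w_i$, and the block factorization, whereas the paper simply cites the theorem.
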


\begin{proof}
The matrices $L_{N+1}$ and $B_N$ are invertible by
Theorem~\ref{thm:main}.  Multiplying
\eqref{eq:intro-vector-congruence} and
\eqref{eq:intro-radical-congruence} by their inverses on the left and on the
right gives~\eqref{eq:unique-pullbacks}.
\end{proof}

Proposition~\ref{prop:quadratic-pullback} is an unconditional statement of
linear algebra.  Its interpretation as an action in the generic curvature
variables additionally requires an independent derivation of the field map
to the diagonal free multiplets.

Once the proposed action and pairing are assumed, the passage to the density
formula of~\cite{Metsaev2024AdS} is a finite reindexing.

\begin{proposition}[Conditional finite regrouping]
\label{prop:conditional-regrouping}
Assume that the action and pairing Ansatz (2.41)--(2.42) of
\cite{Metsaev2024AdS} holds in dimension $D=2N+4$.  Then the corresponding
generic curvature action has the finite regrouping stated in equation
(2.43) of that paper.
\end{proposition}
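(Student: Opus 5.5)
The plan is to treat (2.41)--(2.42) of \cite{Metsaev2024AdS} as a black box and perform only algebra. We never need to know what the interacting action ``is.'' The Ansatz writes the generic Lagrangian as a pairing $\langle F,F\rangle$ of the curvature column $F=(F_0,\ldots,F_N)$, together with the radical curvatures $F_{\rm sub}$ and Stueckelberg terms. The pairing is $H_N$ on the vector block and $H_N^{\rm sub}$ on the radical block. By Proposition~\ref{prop:quadratic-pullback}, these are the unique forms compatible with the diagonal decoupled action under the maps \eqref{eq:vector-field-maps}--\eqref{eq:radical-field-maps}. So the first step is to write the Ansatz density as
\[
 \sum_{0\le m,n\le N} (H_N)_{mn}\,F_m F_n
 +\sum_{0\le m,n\le N-1}(H_N^{\rm sub})_{mn}\,F_{{\rm sub},m}F_{{\rm sub},n}
\]
with trace and index contractions suppressed.

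The second step is the regrouping itself. We insert the Hankel definitions \eqref{eq:H-definition} and \eqref{eq:Hsub-definition} and collect terms by total degree $s=m+n$. This gives
\[
 \sum_{s=0}^{N} c_s\sum_{m+n=s}F_mF_n
 +\sum_{s=1}^{N}c_s\sum_{m+n=s-1}F_{{\rm sub},m}F_{{\rm sub},n}.
\]
The cutoff $c_s=0$ for $s>N$ makes both sums finite. Next we substitute $c_s=(-\rho)^{N-s}G_s$ from \eqref{eq:source-G}. The result is a single sum over $s$ of $(-\rho)^{N-s}G_s$ times a ``degree-$s$'' block. We then check that this is exactly the shape of (2.43), with $G_s=e_{N-s}(\lambda_1,\ldots,\lambda_N)$ playing the role of Metsaev's coefficients. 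For $N\le3$ the coefficients are the rows \eqref{eq:small-G}, which fixes the conventions.

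The third step is bookkeeping. The cross terms with $m\ne n$ appear twice, and the factor $\tfrac12$ or $\tfrac14$ normalizations must be tracked. Whether (2.43) writes the sum symmetrically, or as $m\le n$ with multiplicity factors, decides how the inner sum over $m+n=s$ is displayed. The Stueckelberg and auxiliary covariant combinations $F_{\rm sub}^A$ enter linearly in the same radical block, so they regroup identically.

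I expect the main obstacle to be not the algebra but matching conventions with the source. Three things must be fixed: the sign of $\rho$, the index shift between $E_0,\ldots,E_N$ and the radical labels, and whether the radical pairing starts at $c_1$. I would pin these down by checking the $N=1,2,3$ cases against the published formula, which the supplement verifies. Once they match, the general-$N$ statement follows because the regrouping uses only the Hankel dependence on $m+n$, which is uniform in $N$.
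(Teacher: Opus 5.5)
Your proposal is correct and is essentially the paper's argument. After substituting the pairing coefficients of (2.42), the vector and radical contributions are $-\tfrac14$ and $-\tfrac12$ times the Hankel pairings $H_N$ and $H_N^{\rm sub}$ in the curvatures $F_{2m+2}$ and $F_{2m+1}$. Grouping by total degree ($k=m+n$, resp.\ $k=m+n-1$ with $m,n\geq1$) is a bijection onto the index sets of (2.43), so both blocks carry the common factor $-G_k(-\rho)^{N-k}$ and the radical inner sum is empty at $k=0$.

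Two small corrections. First, read the identification with $H_N$ and $H_N^{\rm sub}$ directly off the assumed Ansatz rather than attributing it to Proposition~\ref{prop:quadratic-pullback}, which says nothing about Metsaev's coefficients without an independent derivation of the field map. Second, both sides are sums over ordered pairs and the prefactors pass through unchanged, so no multiplicity bookkeeping or $N\leq3$ calibration is needed.
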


\begin{proof}
After substituting the pairing coefficients, the vector contribution is
\begin{equation}\label{eq:vector-generic-sum}
 -\frac14
 \sum_{\substack{0\leq m,n\leq N\\m+n\leq N}}
 G_{m+n}(-\rho)^{N-m-n}F_{2m+2}F_{2n+2},
\end{equation}
and the radical contribution is
\begin{equation}\label{eq:radical-generic-sum}
 -\frac12
 \sum_{\substack{1\leq m,n\leq N\\m+n\leq N+1}}
 G_{m+n-1}(-\rho)^{N+1-m-n}F_{2m+1}F_{2n+1}.
\end{equation}
In~\eqref{eq:vector-generic-sum} set $k=m+n$; in
\eqref{eq:radical-generic-sum} set $k=m+n-1$.  These substitutions are
bijections from the two displayed index sets to, respectively,
\[
 0\leq k\leq N,\quad m+n=k,
 \qquad
 1\leq k\leq N,\quad m+n=k+1.
\]
Both terms therefore carry the same factor $-G_k(-\rho)^{N-k}$, and their
sum is
\begin{equation}\label{eq:conditional-density-regrouping}
 -\sum_{k=0}^{N}G_k(-\rho)^{N-k}
 \left(
  \frac14\sum_{m+n=k}F_{2m+2}F_{2n+2}
  +\frac12
   \sum_{\substack{m,n\geq1\\m+n=k+1}}
   F_{2m+1}F_{2n+1}
 \right).
\end{equation}
The second inner sum is empty when $k=0$.  Thus
\eqref{eq:conditional-density-regrouping} is exactly the density definition
in (2.43).  This proves the finite regrouping (2.43); it does not prove the
Ansatz (2.41)--(2.42).
\end{proof}

The remaining issue is genuinely nonlinear.  A simple calculation at the
first dimension not worked out in~\cite{Metsaev2024AdS} shows that the
quadratic form and the most immediate algebraic constraints do not select a
unique product.

\begin{proposition}[A one-parameter family in the fixed generic basis]
\label{prop:n4-nonuniqueness}
For $\rho\neq0$ and $N=4$, the algebraic constraints consisting of
associativity, Frobenius invariance, the Gram form $H_4$, and a fixed
$\rho=0$ specialization admit a one-parameter family of pairwise distinct
products in the fixed generic basis.
\end{proposition}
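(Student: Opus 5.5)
The plan is to obtain the family by transporting a single curved product with isometries of $H_4$ that reduce to the identity at $\rho=0$, and then to show that distinct isometries in the family give distinct products. No new product is solved for directly.

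\emph{Reference product.} Set $P_4(z)=zQ_4(z)$ and $A_\rho=\R[z]/(P_4)$. Equation~\eqref{eq:bezout-kernel} says that $H_4$ is the Bezoutian of $P_4$ and $1$. I would therefore choose a basis $E_m=p_m(z)$ of $A_\rho$, with $p_m(z)=z^m+O(\rho)$, such that the trace-type Frobenius form of $A_\rho$ has Gram matrix $H_4$ in this basis. The transport of Section~\ref{sec:transport} for $N\le 3$ suggests doing this by pulling back the diagonal product $e_ie_j=\delta_{ij}e_i$ along $L_5$. Concretely, on the decoupled side I take the split algebra $\R^5$ with diagonal form $\Delta_4$. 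Proposition~\ref{prop:quadratic-pullback} and Proposition~\ref{prop:tensor-transport} then give an associative, commutative product $C$ in the generic basis, with $H_4$ invariant.

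The delicate point in this step is the flat specialization. I must check that $C$ is polynomial in $\rho$ and that at $\rho=0$ it reduces to the truncated monomial product $E_mE_n=E_{m+n}$ (zero when $m+n>4$). I expect to verify this by an explicit symbolic computation at $N=4$. It should be the Vandermonde/Lagrange structure of $L_5^{-1}$ in~\eqref{eq:L-inverse} that makes the $\rho\to0$ limit regular.

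\emph{Deforming by isometries.} Choose a real antisymmetric $5\times5$ matrix $S\neq0$ and set $X=\rho\,SH_4^{-1}$. Since $H_4$ is antidiagonal-invertible at $\rho=0$, $H_4^{-1}$ is polynomial in $\rho$, and so is $X$. The matrix $X$ is $H_4$-skew, because $XH_4+H_4X^{\mathsf T}=\rho(S+S^{\mathsf T})=0$. Hence $g_t=\exp(tX)$ is an isometry, $g_tH_4g_t^{\mathsf T}=H_4$.

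Define $C_t$ by applying~\eqref{eq:transport-tensors} to $C$ with $L=g_t$, keeping the basis labels fixed. By Proposition~\ref{prop:tensor-transport}, each $C_t$ is commutative, associative and Frobenius for the unchanged form $H_4$. At $\rho=0$ we have $g_t=I$, so the flat specialization is untouched.

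\emph{Distinctness.} Suppose $C_t=C_s$. Then $g_{t-s}$ is an automorphism of $(A_\rho,C)$. For $\rho\neq0$ the nodes are distinct, so $A_\rho\cong\R^5$ is split semisimple. Its automorphism group is therefore finite (permutations of the primitive idempotents), and its derivations vanish.

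The set $\{u: g_u\in\operatorname{Aut}\}$ is a closed subgroup of $\R$. If it were not discrete, it would be all of $\R$, and then $X$ would be a derivation, hence $X=0$, a contradiction. So this set is discrete, say contained in $u_0\mathbb Z$ with $u_0>0$. Restricting $t$ to the interval $(0,u_0)$ then gives a one-parameter family of pairwise distinct products.

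The main obstacle is the flat-specialization check in the first step: exhibiting $C$ with the correct $\rho=0$ limit. If transport of the diagonal product fails to be regular at $\rho=0$, the fallback is to solve for a basis $p_m=z^m+\rho(\cdots)$ of $\R[z]/(P_4)$ whose Frobenius Gram matrix equals $H_4$, by a triangular linear solve in powers of $\rho$.
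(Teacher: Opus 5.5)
Your Step~1 fails, and the failure is structural rather than a delicate limit.

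\textbf{Why the diagonal pullback cannot work.} Pull back any diagonal decoupled product $\tilde e_a\tilde e_b=\delta_{ab}\nu_a\tilde e_a$ along $L_5$. Then $E_m=\sum_a(L_5^{-1})_{ma}\tilde e_a$. Since $x_0=0$ is a node, $(L_5^{-1})_{0a}$ is the constant term of the Lagrange polynomial of $x_a$, namely $\delta_{a0}$. So $E_0=\tilde e_0$ is proportional to the massless idempotent, and $E_0E_j=\nu_0(L_5^{-1})_{j0}E_0=\nu_0(c_j/c_0)E_0$. For your choice $\nu_a=1$ this is $(-\rho)^{-j}(G_j/G_0)E_0$; for instance $E_0E_4=E_0/(8!\rho^4)$. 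For every normalization $E_0E_j\in\R E_0$. This survives any $\rho\to0$ limit and contradicts $E_0E_1=E_1$ in the flat product. Section~\ref{sec:transport} also does not suggest a diagonal decoupled product: the Appendix~C maps transport Metsaev's local curved products, such as $E_1^2=\tfrac43E_2+\tfrac{28}{3}\rho E_3$ at $N=3$.

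\textbf{The fallback carries the proof.} It can be completed, but not by a linear solve, because $TST^{\mathsf T}=H_4$ is quadratic in $T$. In the monomial basis of $\R[z]/(P_4)$, the residue functional (coefficient of $z^4$) has Gram matrix $S=H_4^{-1}$. This holds because the Horner polynomials $\sum_j(H_4)_{mj}z^j$ are dual to the monomials. Then $T=|H_4|=(H_4^2)^{1/2}$ is real-analytic in $\rho$, equals $I$ at $\rho=0$, and satisfies $TST^{\mathsf T}=H_4$. With this repair, your isometry deformation and your distinctness argument via the finite automorphism group of $\R^5$ are correct.

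\textbf{Even repaired, the family differs from the paper's.} The paper writes down a strictly lower-triangular, hence nilpotent, matrix $A_\tau$ for multiplication by $E_1$, polynomial in $\rho$. It checks $A_\tau^{\mathsf T}H_4=H_4A_\tau$ and uses the nonzero subdiagonal $1,\tfrac32,\tfrac32,1$ to transport $\R[t]/(t^5)$. Every member then has unit $E_0$, a nilpotent radical spanned by $E_1,\dots,E_4$, and flat limit exactly (B.2); distinctness is read off directly from $A_\tau$. Your products instead are split semisimple for $\rho\neq0$:
\begin{itemize}
\item they have zero radical, so there is no room for $H_4^{\rm sub}$ and the Stueckelberg sector, unlike Metsaev's local products;
\item $E_0$ is not the unit of your reference product;
\item the structure constants are only real-analytic in $\rho$.
\end{itemize}
You also target $E_mE_n=E_{m+n}$, whereas the fixed specialization in the paper is (B.2), $E_mE_n=e_{m,n}E_{m+n}$. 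Reaching it would need a diagonal $T(0)$ and a rescaled functional.

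So your argument proves at most the literal wording, with a different flat product. It does not establish non-uniqueness among local continuations of the published products.
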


\begin{proof}
Let multiplication by the first radical generator be represented in the
fixed generic basis by
\begin{equation}\label{eq:n4-family}
 A_\tau=
 \begin{pmatrix}
 0&0&0&0&0\\
 1&0&0&0&0\\
 0&\frac32&0&0&0\\
 0&\rho(\frac{109}{10}+\frac{\tau}{60})&\frac32&0&0\\
 0&\tau\rho^2&30\rho&1&0
 \end{pmatrix},
 \qquad \tau\in\mathbb R.
\end{equation}
Direct substitution gives
\begin{equation}\label{eq:n4-self-adjoint-family}
 A_\tau^{\mathsf T}H_4=H_4A_\tau.
\end{equation}
The four entries immediately below the diagonal are
$1,\frac32,\frac32,1$.  Hence
\[
 1,A_\tau1,\ldots,A_\tau^4 1
\]
is a basis for every $\tau$.  Identifying this cyclic basis with
$1,t,\ldots,t^4$ transports the product of
$\mathbb R[t]/(t^5)$ to the displayed generic basis.  The resulting product
is commutative and associative.  Equation
\eqref{eq:n4-self-adjoint-family} implies invariance of $H_4$ for every
polynomial in $A_\tau$, and therefore Frobenius invariance for the transported
product.

The curvature-dependent entries in~\eqref{eq:n4-family} vanish at $\rho=0$,
so the flat specialization is independent of $\tau$.  The flat-space
coefficients in equation (B.2) of~\cite{Metsaev2024AdS} are
\[
 e_{m,n}=\frac{(m+n)!}{m!n!}
 \frac{(N-m)!(N-n)!}{N!(N-m-n)!}.
\]
For $N=4$ they give
\[
 (e_{1,0},e_{1,1},e_{1,2},e_{1,3})
 =(1,\frac32,\frac32,1),
\]
which are precisely the four subdiagonal entries above.  Thus the common
flat specialization is exactly the flat-space product of
equation (B.2).
Distinct values of $\tau$ give distinct multiplication operators for the
fixed first generator and hence distinct products in the fixed basis.
\end{proof}

No claim is made that different values of $\tau$ define inequivalent theories
under arbitrary curvature-dependent field redefinitions.  The proposition
asserts non-uniqueness only in the fixed generic basis used for the source
coefficient tables.

Proposition~\ref{prop:n4-nonuniqueness} does not rule out an
all-dimensional interacting theory.  It shows instead that the Gram form,
the flat specialization, associativity, and Frobenius invariance are not
enough to determine such a theory uniquely.

\section{Three spectral models}\label{sec:spectral-models}

The same quadratic lattice and factorial weights occur in several classical
settings.  These models clarify why the normalization has a closed form, but
none is needed as an assumption in Theorem~\ref{thm:main}.

\subsection{The Johnson graph}
Let $\mathcal J_N$ be the combinatorial Laplacian of the Johnson graph
$J(2N,N)$.  Its vertices are the $N$-subsets of a $2N$-set, with adjacency
defined by intersection of size $N-1$.  The adjacency eigenvalue in the
$i$th constituent is $(N-i)^2-i$, and its multiplicity is
\begin{equation}\label{eq:johnson-multiplicity}
 m_i=\binom{2N}{i}-\binom{2N}{i-1},\qquad 0\leq i\leq N,
\end{equation}
where the second binomial coefficient is zero at $i=0$
\cite{DalfoEtAl2021}.  Since the graph is $N^2$-regular, the Laplacian
eigenvalues are
\begin{equation}\label{eq:johnson-eigenvalue}
 N^2-((N-i)^2-i)=i(2N+1-i)=\lambda_i.
\end{equation}
A factorial simplification shows $m_0=1$ and $m_i=w_i$ for $i>0$.
Consequently $L_{N+1}=D_NV_N$ is the signed,
multiplicity-normalized spectral evaluation map of $\rho\mathcal J_N$.
The alternating signs arise from the kinetic form, not from the positive
trace form of the graph.

\subsection{\texorpdfstring{A finite $SU(2)$ model}{A finite SU(2) model}}
The tensor power of $2N$ spin-$\frac12$ representations decomposes as
\begin{equation}\label{eq:su2-decomposition}
 (\mathbb C^2)^{\otimes 2N}
 \cong\bigoplus_{J=0}^N V_J^{\oplus m_J},
 \qquad
 m_J=\binom{2N}{N-J}-\binom{2N}{N-J-1}.
\end{equation}
Each $V_J$ has a one-dimensional zero-weight space.  On the total-weight-zero
subspace, the operator $N(N+1)I-\mathbf J^2$ therefore has eigenvalue
$N(N+1)-J(J+1)$ with multiplicity $m_J$.  Taking $i=N-J$ recovers
\eqref{eq:johnson-eigenvalue} and~\eqref{eq:johnson-multiplicity}.  This is the
Boolean-lattice $\mathfrak{sl}_2$ model behind the Johnson scheme
\cite{Feinsilver2011}; it is not a claim that the Yang--Mills fields carry
this $SU(2)$ action.

\subsection{Legendre--Stirling coefficients}
It is convenient to remove the curvature scale from the spectral polynomial
by setting
\[
 \widehat Q_N(z):=\prod_{i=1}^N(z-\lambda_i),
 \qquad Q_N(z)=\rho^N\widehat Q_N(z/\rho).
\]
Equation~\eqref{eq:legendre-product} identifies its coefficients with the
finite Legendre spectrum, and the Legendre operator gives the
finite-dimensional identity
\[
 \widehat Q_N(z)=(-1)^N\det\bigl((N(N+1)-z)I-\mathcal L\bigr)
\]
on $\mathbb R[t]_{\leq N-1}$.  The Johnson model independently supplies the
same spectral nodes together with the multiplicities $w_i$.  The Legendre
description controls the coefficient triangle, while the Johnson description
explains the normalization weights.  Their meeting in the Bezoutian
identity~\eqref{eq:bezout-kernel} is what makes the decoupling map both
explicit and dimension-independent.

\section{Discussion}\label{sec:discussion}

Lemma~\ref{lem:unique-hankel-continuation} characterizes the unique monic
Hankel continuation compatible with the conjectured mass nodes and the
evaluation pattern visible in the three explicit cases of Appendix~C in
\cite{Metsaev2024AdS}.  Theorem~\ref{thm:main} proves its Gram congruences
for arbitrary $N$.  Its Vandermonde form comes from root evaluation for a
Bezoutian, and the square-root factors are Johnson multiplicities.  The same
argument gives the inverse map and the kinetic signatures in closed form.

Once a generic coefficient algebra and radical ideal satisfy the source
identities,
Proposition~\ref{prop:tensor-transport} carries them to the decoupled basis.
This conditional transport is enough to recover all published nonlinear
tables.  The diagonal free action also fixes the generic quadratic forms, and
the density formula (2.43) follows by a finite reindexing if the proposed
interacting action and pairing are assumed.  However,
Proposition~\ref{prop:n4-nonuniqueness} shows that the most immediate
algebraic continuation principles do not determine the missing nonlinear
product.  Extending the interacting generic formulation beyond the published
source products therefore requires additional field-theoretic input.

There is a related geometric direction.  Higher conformal Yang--Mills
energies and currents have been constructed from conformally compact
Yang--Mills theory~\cite{GoverLatiniWaldronZhang2024}, and a compact universal
formula for the corresponding conformal Yang--Mills condition is known on
even-dimensional conformal manifolds~\cite{BlitzGoverKopinskiWaldron2026}.
Relating these geometric functionals to the auxiliary-field elimination of
\cite{Metsaev2024AdS}, with the normalization and density basis of its
equation (7.24), is a separate functional-matching problem.  To our knowledge,
such a matching has not been established.  A matching theorem would provide a
direct route to deriving or testing that formula without relying on further
dimension-by-dimension data.

\enlargethispage{3\baselineskip}

\bibliographystyle{amsplain}
\bibliography{references}

\end{document}